\def\BibTeX{{\rm B\kern-.05em{\sc i\kern-.025em b}\kern-.08em
    T\kern-.1667em\lower.7ex\hbox{E}\kern-.125emX}}
\newcommand{\ours}{\textsf{Delta} } 
\newcommand{\CBO} {CBO}
\newcommand{\VBO} {VBO}
\newcommand{\one} {VPG}
\newcommand{\two} {CPS}
\newcommand{\detector} {CQD}
\newcommand{\Rmnum}[1]{\uppercase\expandafter{\romannumeral#1}}
\newtheorem{lemma}{Lemma}
\begin{document}

\title{Delta: A Learned Mixed Cost-based Query Optimization Framework}

\author{
    \IEEEauthorblockN{Jiazhen Peng$^{\ast 1 \Delta}$, Zheng Qu$^{\ast 2 \Delta}$,  Xiaoye Miao$^{\ast \S 3 \#}$, 
    Rong Zhu$^{\ddagger 4}$}
    \IEEEauthorblockA{$^\ast$Center for Data Science, Zhejiang University, Hangzhou, China}
    \IEEEauthorblockA{$^\S$The State Key Lab of Brain-Machine Intelligence, Zhejiang University, Hangzhou, China}
    \IEEEauthorblockA{$^\ddagger$Alibaba Group, Hangzhou, China}
    \IEEEauthorblockA{ {\{$^{1}$pengjiazhen, $^{2}$qz,  $^{3}$miaoxy
    \}@zju.edu.cn, $^{4}$red.zr@alibaba-inc.com}}
}


\maketitle

\begingroup
\renewcommand\thefootnote{}\footnote{\noindent
$\Delta$ Equal Contribution.
$\#$ Corresponding authors.
}
\addtocounter{footnote}{-1}
\endgroup



\begin{abstract}
Query optimizer is a crucial module for database management systems.
Existing optimizers exhibit two flawed paradigms:
(1) cost-based optimizers
use dynamic programming with \emph{cost} models but face search space explosion and heuristic pruning constraints;
(2) value-based ones
train \emph{value} networks to enable efficient beam search, but incur higher training costs and lower accuracy.
They also lack mechanisms to detect queries where they may perform poorly. 
To determine more efficient plans, we propose \emph{Delta}, a mixed cost-based query optimization framework that consists of a compatible query detector and a two-stage planner.
Delta first employs a Mahalanobis distance-based detector to preemptively filter out incompatible queries where the planner might perform poorly. 
For compatible queries, Delta activates its two-stage mixed cost-based planner.
Stage $\Rmnum{1}$ serves as a \emph{coarse}-grained filter to generate high-quality candidate plans based on the value network via beam search, relaxing precision requirements and narrowing the search space.
Stage $\Rmnum{2}$ employs a \emph{fine}-grained ranker to determine the best plan from the candidate plans based on a learned \emph{cost} model.
Moreover, to reduce training costs, we reuse and augment the training data from stage $\Rmnum{1}$ to train the model in stage $\Rmnum{2}$.
Experimental results on three workloads demonstrate that Delta identifies higher-quality plans, achieving an average 2.34$\times$ speedup over PostgreSQL and outperforming the state-of-the-art learned methods by 2.21$\times$.
\end{abstract}

\begin{IEEEkeywords}
query optimization, deep learning, DBMS
\end{IEEEkeywords}

\section{Introduction}\label{sec:intro}


The query optimizer is a crucial component of database management systems (DBMSs) and significantly influences query execution performance~\cite{chaudhuri1998overview,selinger1979access,graefe1993volcano,graefe1995cascades,Database}. 
Given an SQL query, the query optimizer aims to determine the optimal execution plan that minimizes the cost of executing
it. 
To measure the execution cost, query execution time, i.e., latency, is often used as a proxy metric~\cite{Database}.
However, due to the vast search space, where there are $O(n!)$ possible plans for a query joining \emph{n} tables, finding the optimal plan is a well-established NP-hard problem~\cite{wang1996complexity}.


Existing solutions address this challenge by incrementally building complete query plans from optimized subplans. The fundamental challenge resides in establishing effective selection criteria that ensure local subplan choices collectively maximize global plan optimality.
Based on the \textbf{subplan selection criteria}, we can categorize existing query optimizers into two classes: cost-based optimizers (CBOs)~\cite{selinger1979access, graefe1995cascades, graefe1993volcano, marcus2021bao, zhu2023lero, yu2022cost, chen2023leon} and value-based optimizers (VBOs)~\cite{marcus2019neo, yang2022balsa, chen2023loger,zhong2024foss}.





\textbf{Cost-based optimizers (CBOs)} can be categorized into traditional methods~\cite{selinger1979access, graefe1993volcano, graefe1995cascades} and learning-based variants~\cite{marcus2021bao,zhu2023lero,chen2023leon,yu2022cost}.
Both share the same paradigm and constraint. They estimate the cost of the subplan, 
and consequently rely on dynamic programming (DP) for plan searching.
However, the explosion of the search space and the limitations of heuristic pruning make it hard to balance plan quality and search efficiency.
Specifically, 
on the one hand, to satisfy the optimality principle of DP~\cite{selinger1979access}, \CBO s need to retain all potentially optimal subplans which have \emph{physical properties}~\cite{graefe1993volcano} like \emph{interesting order}~\cite{selinger1979access}, thus increasing the search space. 
On the other hand, to improve search efficiency,    heuristic space pruning may be relied upon, which leads to sub-optimal or even bad plans.
For example, some \CBO s
reduce the dense tree search space ($O(n!)$) to a left-deep tree search space ($O(2^n)$)~\cite{chaudhuri1998overview}. 
This may cause optimal plans in other parts of the search space to be overlooked.
Moreover, due to storage space limitations, when the number of table joins becomes excessive, \CBO s may resort to random { search} strategies, further compromising the quality of the chosen plan~\cite{postgres}.
In these cases, even if the cost model is accurate, \CBO s are hard to obtain the optimal plans, due to the prohibitive enumeration cost.
As a result, the cost model excels at comparing complete plans, rather than guiding effective exploration through the vast search space.

\textbf{Value-based optimizers (\VBO s)} 
enable efficient exploration of vast search spaces via learned value networks, but their reliance on reinforcement learning (RL) results in higher training costs and lower accuracy compared to the learned cost models.
\VBO s estimate the \emph{overall cost} 
(i.e., \emph{value} 
\footnote{\emph{Overall cost} and \emph{value} (also referred to as \emph{cumulative reward}) come from different perspectives: query optimization and reinforcement learning respectively, yet convey the same meaning: the potential of the subplan developing into the optimal plan.}) 
of each subplan by the \emph{value network}.
The \emph{overall cost} refers to the \emph{minimal} one from the costs of all complete plans that include the given subplan.
It is the \emph{promise} of a subplan to develop into the \emph{optimal} plan, the \emph{real} objective we care about in plan generation.
Thus, simple \emph{best-first search} methods~\cite{Rina1985Best}, such as \emph{beam search}~\cite{Peng1988beam}, can be used to explore plans.
If the value network is \emph{accurate}, 
plans found by \emph{beam search} would be optimal.
 Therefore, the value network is more suitable for searching for promising plans from a vast search space, rather than evaluating complete plans. 
However, the \emph{special} prediction task of the value network makes it have higher training costs and lower accuracy than the learned cost model.

\textbf{Mixed cost-based optimizer}.
By analyzing the pros and cons of the \emph{cost} and \emph{value}, we aim to design a new \emph{mixed} cost-based query optimization framework, named \ours 
to determine more efficient execution plans by leveraging both strengths and complementing each other.

To this end, we design a \emph{two-stage} planner based on the mixed cost.
In stage $\Rmnum{1}$, we design a \emph{value}-based plan generator to generate top-$k$ plans as a candidate plan set from the vast plan space.
The plans are ranked by the values estimated by the \emph{value network}.
Due to the estimation errors of the \emph{value network}, 
the goal of this stage is to act as a \emph{coarse-grained} filter to find a set of candidate plans and ensure that at least one plan in this set achieves high quality.
Compared to \VBO s, which aim to find the best one by precisely ranking all plans,
stage $\Rmnum{1}$ saves the search strength of the \emph{overall cost} and significantly relaxes the precision demand for the value network, 
thereby reducing its training difficulty.
In stage $\Rmnum{2}$, we design a \emph{cost}-based plan selector to perform a \emph{fine-grained} evaluation for candidate plans to determine the best plan based on a learned cost model that predicts the latency of each plan.
Compared to \CBO s, stage $\Rmnum{2}$ leverages the evaluation strength of the \emph{cost} model for a complete plan, and the plan space is narrowed down from $O(n!)$ to $k$, which makes the direct evaluation and comparison feasible.

However, two aspects remain improvable.
\emph{First}, there are two learned models in the mixed cost-based optimizer, causing a high training cost.
To train the value network or the cost model, hundreds or thousands of plans need to be executed to construct the training data.
Each plan can take seconds to minutes to execute~\cite{marcus2021bao,yang2022balsa,chen2023loger}.
Moreover, the plan should be executed 
without other { resource-intensive} activities running on the machine so that the runtime of the plan can accurately reflect the execution cost~\cite{Database}. 
It greatly consumes the system resources.
The high demand on both \emph{quality} and \emph{quantity} of training data is slowing down the wider adoption of learned optimizers. 
To address this challenge, on the one hand, as mentioned above, our two-stage planner reduces the accuracy demand for the value network, thus lowering its training cost. 
On the other hand, we fully utilize the executed plans collected for training the value network in stage $\Rmnum{1}$.
We reuse and augment them for training the cost model in stage $\Rmnum{2}$.

\emph{Second}, all the optimizers, including the mixed cost-based optimizer cannot work well for the arbitrary query.
We formally classify queries where the planner performs poorly as planner-incompatible queries (incompatible queries for short), distinguishing them from compatible queries where the planner maintains stable
effectiveness.
Specifically, our two-stage planner is trained for a specific workload, which makes it prone to significant performance drops when handling queries with different distributions.
To this end, we design a Mahalanobis distance-based detector.
The detector first measures the deviation between the incoming queries and the training workload distribution and then identifies the queries with high deviation as incompatible.
The compatible queries would be optimized by the mixed cost-based planner.
For the incompatible queries, \ours would select a conservative strategy, such as a traditional native query optimizer to generate plans.

In summary, we propose our query optimization framework, \textsf{Delta}. 
It consists of a \emph{two-stage} planner based on the \emph{mixed cost} and a compatible query detector, where the planner consists of a \emph{value}-based plan generator and a \emph{cost}-based plan selector.
The source code referenced in this paper is available on GitHub~\cite{delta}.
Our main contributions
are described below.

\begin{itemize} [itemsep=2pt,topsep=2pt,parsep=0pt,leftmargin=*]

\item We are the \emph{first} to design a \emph{mixed cost}-based planner with a compatible query detector to determine more efficient plans.

\item 
The \emph{mixed cost}-based planner combines a \emph{value}-based generator for high-quality candidates and a learned \emph{cost}-based selector for final plan selection. 
To reduce the training cost, we reuse and augment the training data.

\item The compatible query detector filters out incompatible
queries where the 
mixed cost-based planner 
might perform poorly before planning based on Mahalanobis
distance.

\item  Extensive experiments on three well-known workloads demonstrate the remarkably superior effectiveness of \textsf{Delta}. 
Compared to the traditional optimizer, PostgreSQL, \ours achieves a 2.34$\times$ speedup 
in workload execution.

\end{itemize}

\textbf{Organization}.
Section~\ref{sec:related} reviews related works in query optimization. 
Section~\ref{sec:notation} establishes notation and states the problem we study.
Section~\ref{sec:EASE} presents \textsf{Delta}'s framework, detailing its architecture design and workflow.
Section~\ref{sec:planner} describes the two-stage mixed cost-based planner construction, explaining how it integrates multiple models to search plans and how the models train.
Section~\ref{sec:ADV} reveals the compatible query detection mechanism.
We report detailed evaluation results in Section~\ref{sec:exp} and conclude in Section~\ref{sec:conclusion}.

\section{Related Work}\label{sec:related}
Existing end-to-end query optimizers can be classified into the two following categories based on their own subplan selection criteria.

\subsection{Cost-based Optimizers}
Cost-based optimizers (\CBO s)~\cite{selinger1979access,graefe1993volcano,graefe1995cascades,marcus2021bao,yu2022cost, zhu2023lero, chen2023leon} estimate the cost of individual subplans and employ dynamic programming (DP) to construct plans 
 from the vast plan search space.
Existing \CBO s can be further divided into one-stage ones and two-stage ones.

\textbf{One-stage CBOs}~\cite{selinger1979access,graefe1993volcano,graefe1995cascades, chen2023leon} only take the top-1 plan based on the cost model as the optimal plan,
which makes a high demand for the accuracy of the cost model. 

Traditional optimizers~\cite{selinger1979access,graefe1993volcano,graefe1995cascades} estimate the cost 
using a statistics-based cost model to determine the best plan.
This 
model relies heavily on over-idealized independence and uniformity assumptions, leading to large errors~\cite{zhu2023lero}.
Moreover, the errors propagate exponentially through joins~\cite{Ioannidis1991errors}.

A learning-based optimizer, Leon~\cite{chen2023leon} replaces the traditional cost model with a learning-to-rank model for more accurate ranking of alternative access plans. 
However, it 
overlooks the excessively large training cost, making it impractical~\cite{Lehmann2024behaving}.


\textbf{Two-stage CBOs}~\cite{marcus2021bao, zhu2023lero, yu2022cost} first adopt different strategies that partly guide the plan search process on top of a traditional optimizer like PostgreSQL, to generate diverse plans as candidate plans. They then learn a cost model to estimate the latency or rank of different plans to select the best plan from the candidate plans. 

Bao~\cite{marcus2021bao} considers that the traditional cost model estimation errors affect the choices of \emph{join} or \emph{scan} operators, compromising the generated plan quality.
To correct this,
Bao advises which operations not to use in the whole plan by disabling different subsets of join and scan operations to generate different plans.
Lero~\cite{zhu2023lero} thinks the \emph{cardinality} estimation errors in the cost model ultimately affect the rank of alternative subplans.
It scales the cardinality estimates by different factors at various join levels to affect the rank of alternative subplans to generate diverse plans.
HybridQO~\cite{yu2022cost} specifies different join order leading (i.e. prefix) to guide the plan generation process to generate various plans.
All above plan generators only specify \emph{part} processes (i.e. plan-level operations, cardinality scaling or join order leading, referred to as \emph{hints}) on the traditional \CBO s, which is a \emph{coarse-grained tuning} and thus cannot control the quality of the candidate plan set.


In addition, all \CBO s rely on DP to search plans from the vast plan search space based on the cost model.
The explosion of the search space and the limitations of heuristic pruning make it hard to balance plan quality and search
efficiency.
Even if the cost
model is accurate, \CBO s are hard to obtain the optimal plans, due
to the prohibitive enumeration cost. 

\subsection{Value-based Optimizers}
Value-based optimizers (\VBO s)~\cite{marcus2019neo, yang2022balsa, chen2023loger,zhong2024foss} estimate the overall cost of subplans using the value network 
and explore the entire plan space through beam search to generate the complete execution plan. 
The overall cost indicates the promise of
a subplan to develop into the optimal plan, the real objective we care about in plan generation. 
If the value network is accurate, the plans searched by the beam search would be optimal. 
However, it is more challenging to build the value network than the cost model.
Because obtaining the ground truth of the overall cost of the subplan is harder than the cost.
All the previous works build the value network by reinforcement learning.
They take the top-1 plan estimated by the value network as the execution plan, which makes a high demand for the precision of the value network.

\subsection{Differences from Prior Works}
\ours searches plans by mixed cost models and two-stage search strategy to determine more efficient plans.
Compared to the two-stage \CBO s 
that depend on
traditional optimizers and coarse-grained hints, \ours adopts a value-based plan generator to generate higher-quality candidate plans.
Compared to the \VBO s, \ours modifies the exploration goal from selecting one optimal plan to 
selecting the top-k plans, relaxing the precision requirements for the value network.
Moreover, existing optimizers assume they can handle all situations equally well. This often results in very poor plans that work badly for specific cases.
\ours employs a Mahalanobis distance-based detector to preemptively filter out incompatible queries where the mixed cost-based planner might perform poorly.


\section{Preliminaries}\label{sec:notation}
We first state the notation. 
Then, we show the design goals of the core components.
Table~\ref{tab:notations} summarizes the frequently used notation throughout the paper.

\begin{table}[]     \centering     \footnotesize     \caption{Notation description}     \label{tab:notations}     \begin{tabular}{|p{0.6in}|p{2.4in}|}     \hline     \textbf{Notation} & \textbf{Description} \\ \hline     $Q, P, s$ & the query, the plan and the subplan              \\ \hline     $\mathcal{C}(P), \mathcal{L}(P)$ & the execution cost and the latency of the plan $P$     \\ \hline     $\mathcal{\overline{C}}(Q, s)$ & the overall cost of the subplan $s$ in query $Q$     \\ \hline      $\mathcal{\overline{L}}(Q, s)$ & the overall latency of the subplan $s$ in query $Q$     \\ \hline      $\mathcal{M}, \mathcal{V}$ & the cost model and the value network     \\ \hline          $\mathcal{N}, P_0$ & PostgreSQL’s native optimizer and its plan for $Q$ \\ \hline      $\mathcal{R}(P)$ & the latency of plan $P$ relative to the plan $P_0$     \\ \hline      $\mathcal{G}$ & a candidate plan generator     \\ \hline       $\mathcal{S}_{\mathcal{G}}^k(Q)$ & a set of $k$ candidate plans generated by $\mathcal{G}$     \\ \hline      $P^*, P^{\triangle}$ & the best and worst plan from $\mathcal{S}_{\mathcal{G}}^k(Q)$     \\ \hline      $\mathcal{R}(P^*)$ & the best-case improvement of $S_{\mathcal{G}}^{k}$     \\ \hline      $\mathcal{R}(P^{\triangle})$ & the worst-case risk of $S_{\mathcal{G}}^{k}$     \\ \hline      $\eta(S_{\mathcal{G}}^{k})$ & the ratio of plans outperform $P_0$ in $S_{\mathcal{G}}^{k}$      \\ \hline          \end{tabular}      \vspace{-10pt} \end{table}

\subsection{Notation}

When a user submits a query $Q$, the query optimizer seeks an optimal execution plan to minimize the execution latency, a proxy for the execution cost.
We denote the latency of a query $Q$ or a plan $P$ as $\mathcal{L}(\cdot)$.
The challenge stems from the combinatorial explosion of possible plans.
For $n$-table joins, the search space is more than $\frac{(2(n-1))!}{(n-1)!}$.
These plans yield the same result but differ in the \emph{join order} and \emph{physical operations}, leading to a varied execution cost.

For example, a query joining four tables $(A, B, C, D)$ may have multiple valid subplans for pairwise joins, e.g. \( (A \bowtie B) \bowtie (C \bowtie D) \) vs. \( A \bowtie (B \bowtie (C \bowtie D)) \).
Each subplan contributes differently to the efficiency of the complete plan.
Suboptimal subplan choices can cascade into globally inefficient plans, making subplan selection criteria critical.

To guide subplan selection, there exist two types of models.
One is the cost model $\mathcal{M}(\cdot)$, which estimates the execution cost $\mathcal{C}(s)$ or latency of a subplan $s$:
$$\mathcal{M}: s \rightarrow \mathcal{C}(s)\ or\ \mathcal{L}(s)$$

Another one is the value network $\mathcal{V}(\cdot)$, which predicts the overall cost of the subplan $s$.
The overall cost $\mathcal{\overline{C}}(Q, s)$ (resp. overall latency $\mathcal{\overline{L}}(Q, s)$) refers to the minimal achievable cost (resp. latency) among all complete plans of the query $Q$ that include the given subplan $s$.
To this end, the value network takes in the query $Q$ and the subplan $s$ and outputs the overall cost (or latency) estimate:
$$\mathcal{V}: (Q, s) \rightarrow \mathcal{\overline{C}}(Q, s)\ or\ \mathcal{\overline{L}}(Q, s) $$
Specifically, if $s$ is complete, its overall cost equals its cost.

\begin{table} \centering \footnotesize \caption{Example of overall cost for subplan $s = A \bowtie B$ in query $Q = A \bowtie B \bowtie C \bowtie D$} \vspace{-0.12pt} \label{tab:overall_cost} \begin{tabular}{|c|c|c|} \hline \textbf{Complete Plan $P$} & \textbf{Plan Cost $\mathcal{C}(P)$} & \textbf{Includes $s = A \bowtie B$ ?} \\  \hline $(A \bowtie B) \bowtie (C \bowtie D)$ & 100 & Yes \\  \hline $((A \bowtie B) \bowtie C) \bowtie D$ & 120 & Yes \\  \hline $\ldots$ & $\ldots$ & $\ldots$ \\ \hline $(A \bowtie B) \bowtie (D \bowtie C)$ & 90 & Yes \\  \hline $B \bowtie (A \bowtie (C \bowtie D))$ & 80 & No \\  \hline \end{tabular} \vspace*{-15pt} \end{table} 

To clarify the overall cost, we break it down with an example and contrast it with
subplan cost.
Consider a query \( Q \) joining tables \( A, B, C, D \), and suppose \( s = A \bowtie B \) is a subplan (joining \( A \) and \( B \) first) with cost 20.  
Table~\ref{tab:overall_cost} shows all the complete plans for query $Q$, 
their costs and whether they include subplan $s$.
To compute the overall cost $\mathcal{\overline{C}}(Q, s)$, we notice the plans that include $s$, with costs 100, 120, $\ldots$, and 90.
Suppose the minimal one is 90, the overall cost $\mathcal{\overline{C}}(Q, s)$ is 90.
We know the cost of the subplan $s$ is 20 and 
the best possible plan including $s$ has a cost of 90.
If we select subplans, the overall cost is more useful than the cost of the subplan.
But the overall cost is harder to estimate than the cost.

Traditional one-stage optimizers, like PostgreSQL’s native optimizer  $\mathcal{N}$, iteratively build a plan by dynamic programming based on the cost estimate, producing a baseline plan $P_0$.

One-stage value-based optimizers often train the value network by reinforcement learning and build the plan by beam search based on the overall cost estimate.

Determining the final plan from the vast plan space is difficult.
Some planners divide the plan search process into two stages.
In stage $\Rmnum{1}$, a candidate plan generator \( \mathcal{G} \)  produces $k$ diverse candidate plans $S_{\mathcal{G}}^k(Q) = \{P_1, \ldots, P_k\}$ from the plan search space.
Plan quality is measured by the relative latency, \( \mathcal{R}(P_i) = \frac{\mathcal{L}(P_i)}{\mathcal{L}(P_0)} \).
\( \mathcal{R}(P_i) < 1 \) indicates \( P_i \) outperforms \( P_0 \).  
In stage $\Rmnum{2}$, a selector \( \mathcal{B} \) aims to identify the optimal plan \( P^* = \arg\min_{P \in S} \mathcal{L}(P) \). 
If $\mathcal{B}$ is accurate, the selected plan \( \hat{P} \) matches \( P^* \).  
We can view the one-stage optimizer as a special case of the two-stage optimizer with $k=1$.

\subsection{Problem Statement}\label{sec:problem}
In the two-stage planner, the generated candidate plan set $S_{\mathcal{G}}^{k}(Q)$ \emph{quality} and the ranker $\mathcal{B}$ \emph{accuracy} jointly determine the final plan quality. 
The \emph{quality} of the candidate plan set is crucial, but it lacks its evaluation.
In this paper, we evaluate the quality of the candidate plan set from three key metrics.
\begin{itemize} [itemsep=2pt,topsep=2pt,parsep=0pt,leftmargin=*]

\item The best-case improvement $\mathcal{R}(P^*)$ is defined as the relative latency of the best plan $P^*$ from $S_{\mathcal{G}}^{k}(Q)$, indicating the best achievable performance. The smaller the value, the stronger the optimization potential.

\item The worst-case risk $\mathcal{R}(P^{\triangle})$ is the relative latency of the worst plan $P^{\triangle}$ from $S_{\mathcal{G}}^{k}(Q)$, representing the maximum latency degradation caused by the worst candidate plan. 
If the value is more than 1, the larger 
it is,
the higher the risk.

\item The ratio $\eta({S_{\mathcal{G}}^{k}})$ of the high-quality plans that outperform $P_0$ in $S_{\mathcal{G}}^{k}(Q)$, quantifying the robustness of the candidate set.
The greater the ratio, the higher the probability that the final plan will be better than $P_0$.

\end{itemize}

A candidate plan set can be considered higher-quality if it 
yields a smaller $\mathcal{R}(P^*)$, a smaller $\mathcal{R}(P^{\triangle})$, and a larger $\eta({S_{\mathcal{G}}^{k}})$.

If $\mathcal{R}(P^*)$ is over 1.5, 
the latency of the best plan in the candidate set $\mathcal{L}(P^*)$ will be significantly greater than that of the plan generated by the traditional optimizer like PostgreSQL.
We call such queries the generator $G$-incompatible queries, short for incompatible queries.
Conversely, queries satisfying $\mathcal{R}(P^*) < 1.5$ are termed generator $G$-compatible queries, short for compatible queries.

\textbf{Our goal.}
Given a DBMS, a dataset, and a training workload, our goal is to design a new two-stage query optimization framework to determine a more efficient plan for the user query.
It involves a set of potential problems to solve, including,
(1) designing a plan generator to generate a high-quality candidate plan set;
(2) learning a high-precision model for plan selection with minimal training cost;
(3) designing a compatible query detector for the plan generator.

\section{Framework Overview}\label{sec:EASE}
\begin{figure*}[t]
\center
  \includegraphics[width=0.9\linewidth]{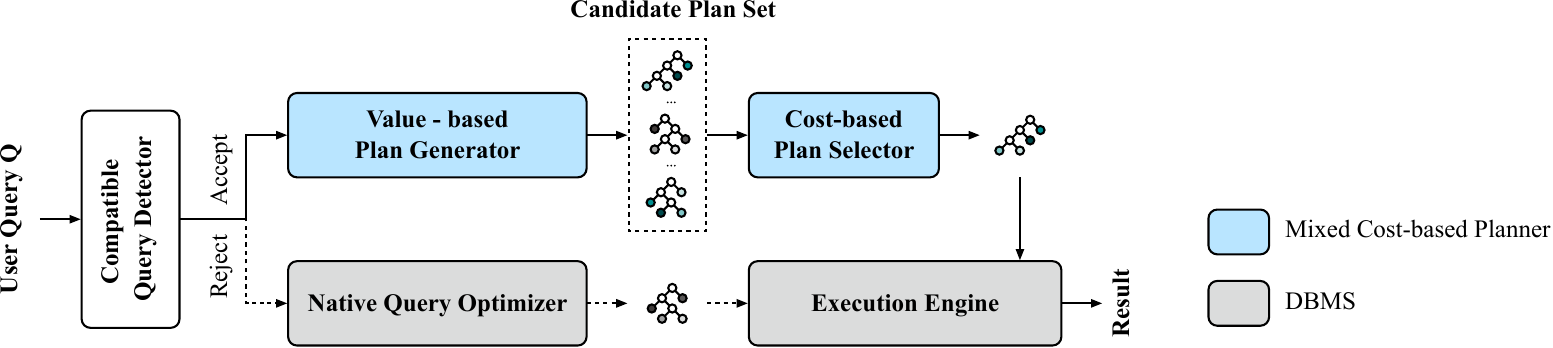}
\caption{The architecture and workflow of \ours}
 \label{fig:overview}
 \vspace*{-0.30in}
\end{figure*}
In this section, to achieve the goals above, we analyze how the proposed framework \ours is designed and introduce its architecture and workflow.

\textbf{Design}. 
As analyzed in Section~\ref{sec:related}, the overall cost (i.e., value) is more suitable than the cost to be the evaluation metric of subplans when generating plans from a vast plan search space.
To this end, to generate a high-quality candidate plan set, we expand the value-based optimizers (\VBO s), such as  LOGER~\cite{chen2023loger} or Balsa~\cite{yang2022balsa} to the value-based plan generator (\one) to select the top-$k$ plans ranked by the value network, rather than the top-1 plan.

Compared to the \VBO s, we find there is a more efficient plan in the top-$k$ plans
than the top-1 plan at a high probability due to the minor error of the value network.
We will prove it in Section~\ref{subsec:one} through formulas that the quality of the top-k candidate plan set is not less than that of the top-1 candidate plan set. Additionally, through experimental observations on LOGER and Balsa, we find that the quality of the top-10 candidate plan set is greater than that of the top-1 candidate plan set in 39.4\% and 91.2\% of the cases, respectively.
Compared to the cost-based plan generators, which apply \emph{coarse-grained hints} on traditional \CBO s and cannot control the hint quality and the heuristic pruning during plan generation, the value-based plan generator can identify the top-$k$ plans according to the estimates of the value network, providing a basis for assessing the quality of the plans.

Due to the simpler prediction task, the cost model is easier to train than the value network.
Therefore, to select the best plan from the candidate plan set, we employ a cost-based plan selector (CPS). The selector learns an execution cost model to predict the latency of each candidate plan and selects the one with minimal estimated latency, rather than using the value network.

To reduce the training cost of the two learned models, we utilize the two-stage planner to \emph{relax the amount demand} for training data of the value network via allocating the accuracy demand.
Meanwhile, we reuse and augment the training data of the value network in stage $\Rmnum{1}$ to train the cost model in stage $\Rmnum{2}$.
Our data augmentation can yield several times diverse yet natural samples by extracting executed plans to supplement training data.
For example, we increase the unique training samples by 8-9$\times$ on the Join Order Benchmark~\cite{leis2015good}
and 2.5-4$\times$ on TPC-DS~\cite{poess2002tpc} and Stack~\cite{marcus2021bao}.

The plan quality generated by \one~and 
 selected by \two~is determined by the accuracy of the learned value network and the cost model.
 Therefore, to guarantee the plan quality and identify compatible queries, we analyze the error sources of the learned models.
We can divide the error causes of learned models into two situations.
1) \textit{In distribution} (i.e., the testing workloads have similar distributions with training workloads): errors come from insufficient training data, noisy training data, loss of features, and so on~\cite{weng2024eraser}.
2) \textit{Out of distribution} (i.e., the testing workloads are out of
distribution with the training workloads): the current learned model without updating lacks the knowledge of the out-of-distribution data, which does not apply to such testing workloads.

For queries in distribution, we employ a two-stage plan search strategy to reduce the accuracy requirements for the value network. Furthermore, in the second stage, we minimize cost model errors through two approaches: data augmentation to supplement training data and heteroscedastic regression loss to mitigate the impact of label noise.
For the queries out of distribution, the predictions of the value network and cost model are quite unreliable. Therefore, such queries should be regarded as \emph{incompatible} queries and we design the \emph{compatible query detector} (\detector) based on the \emph{Mahalanobis distance} between the user query and training data, which is a \emph{model-free} method that does not produce an additional training cost.

\textbf{Architecture and workflow}.
As depicted in Figure~\ref{fig:overview},
\ours consists of a mixed cost-based planner and a compatible query detector
(\detector). 
The mixed cost-based planner consists of a value-based plan generator (\one) and a cost-based plan selector (\two) to determine more \emph{effective} plans with the user queries.
\detector~aims to accept the queries \emph{compatible} for the 
\emph{mixed cost-based planner} and reject the \emph{incompatible} queries.

\ours takes the user query as input and outputs the selected plan for the query as the execution plan to the execution engine of DBMS. 
Specifically, \detector~ rejects the \emph{incompatible} queries whose performance would severely degrade if optimized by the mixed cost-based planner.
Rejected queries are then optimized by
a \emph{conservative} strategy, e.g., the naive optimizer of DBMS. 
The backend can utilize \emph{idle} resources to update (i.e., retrain) the models in the \emph{mixed cost-based planner} to support more queries.
The accepted queries would be optimized by the mixed cost-based planner.
The planner splits the optimal plan search process into two stages.
In the first stage, \one~serves as a \emph{coarse-grained} filter to generate the top-$k$ plans based on a value network as a candidate plan set from a vast plan search space.
In the second stage, \two~determines the optimal plan from the candidates based on a learned cost model, which predicts the latency of each candidate plan.

Next, we will elaborate on how the \emph{mixed cost-based planner} generates efficient plans for \emph{compatible} queries in detail in Section~\ref{sec:planner} and how \detector~detects whether the user queries are \emph{compatible} or not.



\section{Mixed Cost-Based Planner}\label{sec:planner}
\begin{figure*}[th]
\centering
\hspace*{-0.07in}
\subfigure[Plan performance of query 17f in JOB from PG, LOGER, and Balsa]{
\label{fig:example}
\raisebox{-0.2cm}{\includegraphics[width=0.49\linewidth]{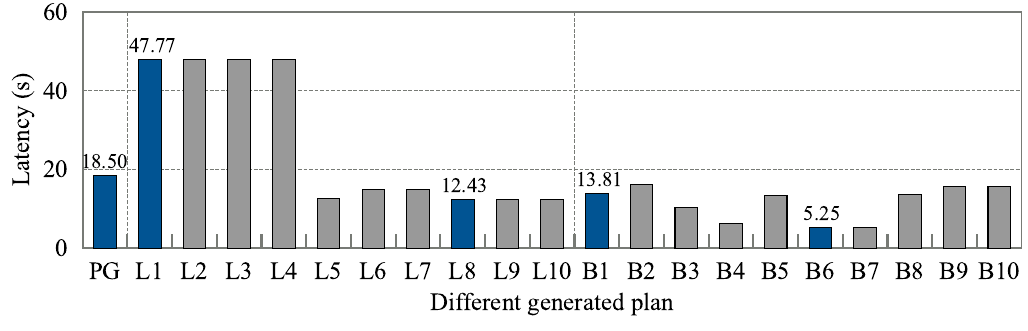}}}
\subfigure[Frequency of best plan positions within the top-10 plans generated by VBOs.
]{
\label{fig:statistics}
\raisebox{-0.2cm}{
\includegraphics[width=0.245\linewidth]{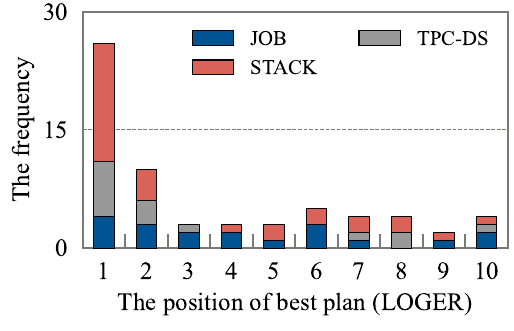}\hspace*{-0.027in}
\includegraphics[width=0.245\linewidth]{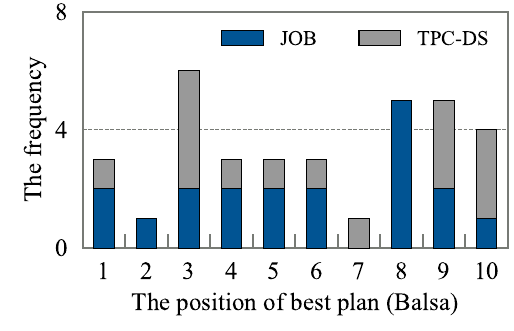}
}
}
\vspace{-0.28pt}
\caption{The example and statistics of top-1 VS top-$k$ plans generated by \VBO s}
\end{figure*}

For an incoming compatible query, the mixed cost-based planner divides the query optimization process into two stages, which fully takes advantage of the value network and cost model to determine a more effective plan.  

\subsection{Value-based Plan Generator} \label{subsec:one}
In the first stage, 
we aim to design a plan generator to generate a high-quality candidate plan set from the vast plan search space for the input query.

As described in Section~\ref{sec:related}, we find the value network is more suitable than the cost model for guiding the plan generator to focus on promising regions in the vast plan search space, but not suitable for determining the final optimal plan.
Furthermore, we find there is a more efficient plan in the top-$k$ plans generated by \VBO s
than the top-1 plan at a high probability due to the estimation error of the value network.

For example, 
Figure~\ref{fig:example} shows the performance (i.e., the 
query plan execution time) of the \emph{state-of-the-art} (SOTA) \VBO s including LOGER~\cite{chen2023loger} and Balsa~\cite{yang2022balsa}, compared to the traditional optimizer PostgreSQL (PG for short).
The query 17f is from the JOB workload~\cite{leis2015good}.
L$i$ (resp. B$i$) represents the $i$-th plan generated by LOGER (resp. Balsa), $i=1,2,\ldots, 10$. Particularly, L1 (and B1) is the plan selected by LOGER (and Balsa).
One can observe that there exists a better plan (i.e., L8 and B6 respectively) of query 17f in the top-10 plans than the plan selected by \VBO s (i.e., L1 and B1 respectively).


To this end, we expand the existing \VBO s  to the 
\one\   
to generate top-$k$ plans estimated by the value network as the candidate plans set, rather than the top-1 plan.
The best-case improvement of the top-$k$ plans is no lower than that of the top-1 plan, as claimed in the following lemma with proof.

\vspace{-0.2pt}
\begin{lemma}
Let \( S_{\mathcal{V}}^k = \{P_1, P_2, \ldots, P_k\} \) be the set of top-\(k\) plans generated by the value-based plan generator, and \( S_{\mathcal{V}}^1 = \{P_1\} \) be the top-1 plan. For any \( k \geq 1 \), the best-case improvement satisfies:  
\[
\mathcal{R}(P^*) \leq \mathcal{R}(P_1),
\]  
where \( P^* = \arg\min_{P \in S_{\mathcal{V}}^k} \mathcal{R}(P) \), and \( P_1 \) is the best plan in \( S_{\mathcal{V}}^1 \).
\end{lemma}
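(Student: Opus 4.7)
The plan is to treat this as a simple set-containment/monotonicity argument about minima, since the claim is essentially that extending the candidate set from one plan to $k$ plans cannot hurt the best-case performance. The only substantive thing to verify is that the top-1 plan $P_1$ selected by the value network is actually a member of the top-$k$ set $S_{\mathcal{V}}^k$; once that is established, the inequality is immediate.

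First, I would unpack what ``top-$k$'' means in this context. By construction of the value-based plan generator, $S_{\mathcal{V}}^k$ is the set of $k$ complete plans with the smallest estimated overall cost (or equivalently, highest rank) according to $\mathcal{V}$. In particular, the single plan with the smallest estimated overall cost is included in this set for every $k \geq 1$. Hence $P_1 \in S_{\mathcal{V}}^k$, i.e., $S_{\mathcal{V}}^1 \subseteq S_{\mathcal{V}}^k$. This observation is the only nontrivial ingredient, and it relies solely on the consistency of the ranking procedure used by \one; I would state it as a one-line justification rather than belabor it.

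Next, I would apply the definition of $P^{*}$ as the minimizer of $\mathcal{R}(\cdot)$ over $S_{\mathcal{V}}^k$. Since $P_1$ lies in $S_{\mathcal{V}}^k$, it is a feasible candidate in the minimization, so
\[
\mathcal{R}(P^{*}) \;=\; \min_{P \in S_{\mathcal{V}}^k} \mathcal{R}(P) \;\leq\; \mathcal{R}(P_1),
\]
which is exactly the claimed inequality. Equality holds precisely when $P_1$ happens to be the latency-optimal plan in $S_{\mathcal{V}}^k$; strict inequality arises whenever the value network's top-1 choice is suboptimal within its own top-$k$ list, which is precisely the situation illustrated by the 17f example in Figure~\ref{fig:example}.

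There is no real obstacle in this proof; it is essentially a ``min over a superset is at most min over a subset'' observation. If I wanted to strengthen the presentation, I would add a brief remark that this bound is tight in the worst case (achieved when $\mathcal{V}$ ranks consistently with latency) and loose in the favorable case where $\mathcal{V}$ misranks plans within its own top-$k$. This interpretive remark motivates the downstream role of the cost-based selector \two{} in stage~II, tying the lemma back to the paper's overall design rationale.
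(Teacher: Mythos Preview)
Your proposal is correct and mirrors the paper's own proof: both argue that $P_1 \in S_{\mathcal{V}}^k$ and then invoke the elementary fact that the minimum of $\mathcal{R}$ over a set containing $P_1$ is at most $\mathcal{R}(P_1)$. The additional interpretive remarks you include (tightness, connection to \two) go beyond what the paper states but are consistent with its narrative.
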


\begin{proof} 
\renewcommand{\qedsymbol}{}
By definition, the best-case improvement of \( S_{\mathcal{V}}^k \), denoted as \( \mathcal{R}(P^*) \), can be represented as:  
\[
\mathcal{R}(P^*) = \min_{P \in S_{\mathcal{V}}^k} \mathcal{R}(P) = \min \big\{ \mathcal{R}(P_1), \mathcal{R}(P_2), \ldots, \mathcal{R}(P_k) \big\}.
\]  
Since \( P_1 \) is contained in \( S_{\mathcal{V}}^k \), it directly follows that:  
\[
\min \big\{ \mathcal{R}(P_1), \mathcal{R}(P_2), \ldots, \mathcal{R}(P_k) \big\} \leq \mathcal{R}(P_1).
\]  
Thus, \( \mathcal{R}(P^*) \leq \mathcal{R}(P_1) \).
\end{proof}



Furthermore, we report the statistical performance of
plan quality
(across three query workloads with 64 queries) 
generated by LOGER and Balsa in Figure~\ref{fig:statistics}.
It counts the plan position (from top-1 to top-10) on which it is the most efficient.
It shows the position (from top-1 to top-10) where the most efficient plan appears.
The ratio of the top-1 plans generated by LOGER being the best ones in the top-10 plans is 40.6\%, and that of Balsa is only 8.8\%. 


To this end, we utilize the value-based plan generator 
to generate the top-$k$ plans as the candidate plans set. 
Compared to the cost-based plan generators, which apply \emph{coarse-grained hints} on traditional \CBO s and cannot control the hint quality and the heuristic pruning during plan generation, the value-based plan generator can identify the top-$k$ plans according to the estimates of the value network, providing a basis for assessing the quality of the plans.
To this end, the quality of plans generated by \one\ is higher than \CBO s, and we will demonstrate it in Section~\ref{subsec:horses}.

\textbf{Top-$k$ plans generation}. 
 Given the query $Q$, the plan generator can generate the complete plan in a bottom-up way by following steps.
First, the generator needs to start from an initial subplan $s_0$, which means an empty plan with no operator is specified. 
Then, it chooses a new operation implemented on the current subplan $s_i$.
The operation can be a scan operator implemented on a table or a join operator that joins two joinable subplans.
Once the operator is determined, the current subplan $s_i$ is transformed into the next subplan $s_j$ and then the generator will choose the next operator.
Finally, the plan is completed when the all tables in $Q$ are joined.

Given the trained value network $V$, 
\one\ generates the top-$k$ plans in a bottom-up way through beam search.
Beam search algorithm explores the promising regions of the plan search space in a limited set, according to the rank of promising values of subplans, which is estimated by the value network.   

\begin{algorithm}[t]\small
\vspace{-0.3pt}
\caption{Top-$k$ Plans Generation by Beam Search}
\label{alg:top-k}
\DontPrintSemicolon
\LinesNumbered
\SetNlSty{Large}{}{:}
\KwIn{a user query $Q$, a trained value network $V$, a beam width $b$, and the number of generated plans $k$}
\KwOut{the generated top-$k$ plans set $S=\{P_1, P_2, \ldots, P_k\}$}
$S \gets \varnothing, s_0\gets \varnothing, \mathcal{B}\gets \{\langle s_0, 0\rangle\}$ \tcc*{initialization}
\While{\emph{$|S|<k$ and $|\mathcal{B}|\neq0$}}
{
    $\langle s_i, \overline{l_i}\rangle \gets$ remove the top element from $\mathcal{B}$\;
    \If{\emph{$s_i$ is a complete plan}}{
        add $s_i$ to $S$\;
    }
    \Else{
        $S_j \gets$ explore all the next states of $s_i$\;
        \For{\emph{each} $s_j \in S_j$}{
            $\overline{l_j} \gets V(Q, s_j)$  \tcc*{compute overall latency}
            insert $\langle s_j, \overline{l_j}\rangle$ to $\mathcal{B}$\ in  ascending order of overall latency\;
        }
        \If{$|\mathcal{B}|>b$}{
            leave the top $b$ elements and remove others from $\mathcal{B}$\;
        }
    }
}
\Return $S$
\end{algorithm}

Algorithm~\ref{alg:top-k} describes how beam search
uses the trained value network $V$ to search the top-$k$ optimal plans for the user query $Q$ in the vast plan search space.
The algorithm maintains a set with the max size $b$ (i.e., beam width in beam search).  It stores the promising subplans sorted by the subplan's overall latency   (estimated by the value network $V$) in ascending order.
It gets subplans sequentially from the set and explores them in a bottom-up way, until generating $k$ complete plans.

We first empty a \emph{set} $S$ to store the top-$k$ complete plans to be returned, an \emph{initial subplan} $s_0$ with its overall latency 0, and a \emph{set} $\mathcal{B}$ to store the \emph{promising subplans} with the overall latency values at the beginning (line 1). 
We then explore the subplans sequentially, until generating $k$ complete plans (lines 2-12).
We get and remove the best subplan $s_i$ with \emph{minimal} overall latency $\overline{l_i}$ from the top of $\mathcal{B}$ (line 3).
If the subplan $s_i$ is complete, we add it to the plans set $S$ (lines 4-5). Otherwise, we continue exploring from the subplan $s_i$  and updating the set $\mathcal{B}$ (lines 6-12). 
We explore all the next new subplans of $s_i$ by adding an operation on $s_i$ and the new subplans form a set $S_j$.  
Then, we compute the overall latency of each new subplan based on the value network and add them into the set $\mathcal{B}$ in the \emph{ascending} order of overall latency (lines 7-10).
$\mathcal{B}$ is sorted by the overall latency, which determines the exploring order.
If the size of $\mathcal{B}$ is more than $b$, the excessive subplans with high overall latency values will be removed (lines 11-12).
The process repeats until $k$ complete query plans are obtained.
Finally, it returns the generated top-$k$ plans.

\textbf{The training of the value network}. 
\ours builds on the value-based optimizer (\VBO), LOGER~\cite{chen2023loger} or Balsa~\cite{yang2022balsa}.
Next, we present an overview of the universal training process of the value network in the \VBO.

Obtaining the ground truth of the overall latency of the given subplan in the value network is more challenging than the latency of that in the cost model.
The reason is obtaining the overall latency requires traversing the plan search space to get the actual execution costs of plans that include the given subplan, which is impractical.
To alleviate the problem, the overall latency can be approximated as the best latency seen so far of a plan that includes the given subplan by reinforcement learning (RL), instead of the global optimal one.

The training process is iteratively observing the execution of diverse plans for each query to obtain feedback and then utilizing it to update the value network. 
As the loop iterates, 
the generated plans are more efficient, 
the approximation of the overall latency gets better, 
and the value network improves.

Specifically, for each iteration, for each query of the training queries $Q_{train}$, we explore different plans and execute them to obtain their latency values.
The executed plans are stored in a plan pool.
The training of the value network would generate and execute different plans for each query. 
Each executed plan $P$ is recorded in the plan pool in the format $(Q, P, \mathcal{L})$, where $Q$ and $\mathcal{L}$ correspond to the query and latency, respectively.
The executed plan dataset will be reused in the training of the learned cost model in the second stage.
For each $(Q, P, \mathcal{L})$, we generate subplans of the plan $P$.
For each subplan $s$, we can get its overall latency $\mathcal{L}$ and store $((Q, s), \mathcal{L})$ in a hash lookup table. 
If $(Q, s)$ exists in the hash table, and the $\mathcal{L}$ is lower than the $\mathcal{L'}$ 
pre-existing in the table, we can update the overall latency with the better value $\mathcal{L}$.
 The value network can be updated with the new data.

\subsection{Cost-based Plan Selector}
\begin{figure}[t]
\center
\includegraphics[width=0.85\linewidth]{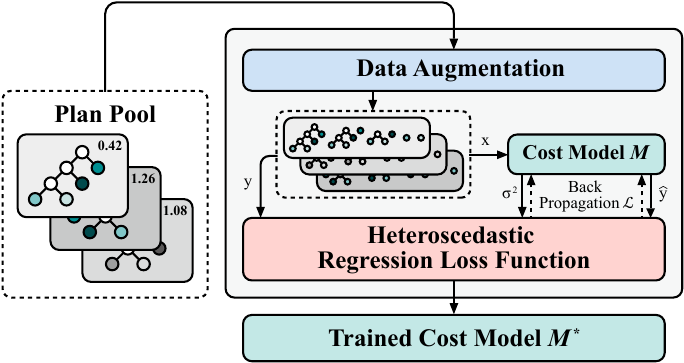}
  \caption{The training of the cost model $\mathcal{M}$}
 \label{fig:model}
 \vspace*{-0.25in}
\end{figure}

In the first stage, \one\ narrows down the range of optimal plan selection from the vast plan search space to $k$ candidate plans based on the value network and beam search.
To this end, in the second stage, we design a cost-based plan selector \two\ to train a high-precision cost model $M$ with minimal extra training cost to guide the planner in determining the optimal plan from the candidates.

\two\ adopts the classical cost model $\mathcal{M}$, which predicts the latency
of a plan to determine the plan with minimal estimated value as the optimal plan.
Figure~\ref{fig:model} shows the training of the cost model $\mathcal{M}$. 
Since the system resource consumption of collecting training data is non-trivial, 
we \emph{reuse} the executed plans in the \emph{plan pool} generated by the training of the value network in stage $\Rmnum{1}$ to train a high-precision cost model $\mathcal{M}$.
Data reusing ensures that the cost model $\mathcal{M}$ does not require additional training data and thus greatly saves the training cost.
Based on the naive prediction task, we can utilize \emph{data augmentation} to yield several times diverse yet natural samples by extracting executed plans to supplement training data.
Then, to reduce the influence caused by label noise, 
we adopt a heteroscedastic regression loss function~\cite{Weihang2023Regression} to train the model, which models the noise of each sample and thus improves the robustness of estimates.

It is worth noting that 
there are two alternative prediction tasks for optimal plan selection.
The first uses a value network to predict the overall cost of
a given query and plan~\cite{marcus2019neo,yang2022balsa,chen2023loger}.
As discussed before, the value network is not suitable for determining the optimal plan 
due to lower accuracy.
The second adopts pair-wise learning-to-rank models~\cite{zhu2023lero, chen2023leon}, which predicts the ranking of two plans sorted by their latency.
The learning-to-rank model predicts the real object the stage cares about and the pair-wise ranking is essentially the classification task, which is easier to fit than the regression task.
However, their training 
remains inefficient,
as it cannot 
fully leverage
the executed plans to enhance accuracy.

\textbf{Data augmentation}. 
The high training cost of a learned model in query optimization 
comes from
\emph{data collection} and \emph{model training}.
Data collection requires hundreds or thousands of plans to execute, 
which 
greatly consumes system resources. 

Many previous studies~\cite{marcus2021bao,yu2022cost,zhu2023lero} suggest that the cost model can learn from the executed plans of historical user queries.
Therefore, there's no need to consider the overhead of training data collection. 
However, it is impractical.
The core issue lies in the inherent instability of system environments during query execution.
Fluctuations in system states cause the execution latency of a plan to reflect its true execution cost poorly. 
This inconsistency introduces large noise into the training data, ultimately undermining the model's accuracy and reliability.
For instance, a simple query might exhibit abnormally long runtime under a heavy system load, leading the model to misinterpret its intrinsic efficiency. Such discrepancies between observed latency and actual execution cost make historical execution data unreliable for training robust cost models.

The runtime of distinct plans could be used as a measure of their execution cost only when these plans are executed under a hypothesized stable state of the system environment, 
such as the consistent initial disk cache state and no other resource-intensive activities running in the system ~\cite{Database}.
It makes training data collection expensive. 

To this end, we make full use of the executed plans to train the model by augmenting training data.
DBMS not only records the latency of each executed plan but also records the execution time of each subplan.
For a sample $(Q, P, \mathcal{L})$ in the executed plan pool collected in the first stage, we can extract subplans and their direct latency values from the executed plan $P$.
Each subplan $s$ with its execution time $\mathcal{L}(s)$ can constitute a unique training sample $(s, \mathcal{L}(s))$. 
This generates many natural and diverse samples, greatly enriching the quantity and diversity of the training data, and improving the accuracy of the model at a negligible cost.
As a result, we utilize such a data augmentation strategy to increase the unique training samples by 8-9$\times$ on the Join Order Benchmark (JOB)~\cite{leis2015good}
and 2.5-4$\times$ on TPC-DS~\cite{poess2002tpc} and Stack~\cite{marcus2021bao}.

\textbf{Loss function}. 
The latency of the plan, as the label of the cost model, inevitably has aleatoric noise due to the minor disturbance,
even though we maintain the consistent initial disk cache state, and no other processes interference.
To reduce the noise, one naive method is to run the same plan multiple times and take the average value as the label.
However, this method consumes too many system resources. Thus, we adopt the heteroscedastic regression loss function~\cite{Weihang2023Regression}  to reduce the influence caused by the aleatoric noise.

Different from classical regression, which assumes all of the modeling errors have the same variance, i.e., the variance is considered to be a constant and will be ignored, heteroscedastic regression assumes that the variance values of errors are different and related to the data points.  
The aleatoric noise of training data can be considered to be heteroscedasticity~\cite{Kendall2017What}.

Therefore, our cost model $\mathcal{M}$ predicts the latency $\mathcal{L}(P)$ of the given plan $P$ and the noise $\epsilon(P)$. It can be formulated as $$\mathcal{M}: P \rightarrow (\mathcal{L}(P),\epsilon(P))$$  
We denote the number of training samples as $N$, the latency of $i$-th sample as $y_i$, 
and the corresponding estimated latency, the estimated noise
of the model as $\hat{y_i}$, $\sigma_i^2$, respectively.
The heteroscedastic regression loss function $\mathscr{L}$ is calculated as
$$\mathscr{L}=\frac{1}{N} \sum_{i=1}^{N}\left\{\frac{\left(y_{i}-\hat{y}_{i}\right)^{2}}{2 \sigma_{i}^{2}}+\frac{\ln \sigma_{i}^{2}}{2}\right\}$$ 

The object of the cost model $\mathcal{M}$ training is to minimize the heteroscedastic regression loss function $\mathscr{L}$.
It makes the prediction more robust. Intuitively, the first term of $\mathscr{L}$, $\sigma_{i}^{2}$ represents the data sample noise. That is, the sample with larger noise indicates lower quality, thus we reduce the weight coefficient of such samples by $1/\sigma_{i}^{2}$. 
This makes the model $\mathcal{M}$ pay more attention to fitting samples with smaller noise and reduce the impact of samples with larger noise. The second term is a penalty term that prevents the model from constantly increasing $\sigma_{i}^{2}$ to reduce the loss.

\textbf{Model details}.
We combine the ways adopted by Bao~\cite{marcus2021bao} and Lero~\cite{zhu2023lero} to featurize plan trees to vector trees by encoding each node in the plan trees.
Each node in a plan tree is featurized into a vector containing: (i) a one-hot encoding of the operator, (ii) the min-max normalized cardinality and cost estimates, and (iii) the one-hot encoding sum of the relations involved in the subplan rooted at the node.  
The featurized query plan tree is passed through three blocks consisting of tree convolution~\cite{Mou2016treeconv,marcus2019neo}, the layer normalization~\cite{Hinton2016Layer}, and ReLU activation layer~\cite{Xavier2011relu}. 
Dynamic pooling~\cite{Mou2016treeconv} replaces the activation layer of the last block to flatten the tree structure into a single vector.
The single vector is passed through two blocks of two fully connected layers with the same structure but different weights to predict the latency and noise, respectively.

\section{Compatible Query Detector} \label{sec:ADV}

To ensure robust optimization, \ours employs a Compatible Query Detector (CQD) to identify queries where the mixed cost-based planner might underperform.

Recall the definition of compatible queries in Section~\ref{sec:problem} and the error sources analysis of the mixed cost-based planner in Section~\ref{sec:EASE}.
A query is deemed compatible with the mixed cost-based planner by CQD if its feature distribution aligns with the training data distribution.
This ensures the planner’s value network and cost model can reliably predict execution plan quality. 
Conversely, out-of-distribution (OOD) queries, those deviating from the training data, cause degraded model predictions, leading to suboptimal plans.
CQD rejects such incompatible cases to safeguard optimization robustness. 
It intercepts OOD queries by computing the Mahalanobis distance~\cite{Mahalanobis1936OnTG,de2000mahalanobis} between query features and the training distribution. 
Unlike Euclidean distance, the Mahalanobis distance is unitless, scale-invariant, and accounts for variable correlations, making it widely adopted for OOD
~\cite{lee2018simple} and outlier detection~\cite{ramaswamy2000efficient}.

The detection process begins with generating a representative baseline plan. For an input query \( Q \), the native optimizer produces an execution plan \( P_0 \), balancing efficiency and representativeness. This design aligns with the value network’s initial training on plans generated by the native optimizer. 
Next, the query-plan pair \( (Q, P_0) \) is encoded into a high-dimensional feature vector \( \vec{x} \) using the same method as the value network. 
Each training sample $(Q', P')$ from the training workload is similarly encoded and all of them are combined into
a matrix $\mathbf{X}$.
The Mahalanobis distance between a test sample \( \vec{x} \) and the training distribution \( \mathbf{X} \) is calculated as:  
\[
    \mathcal{D}_M(\vec{x},\;\mathbf{X}) = \sqrt{(\vec{x}-\vec{\mu})^{\mathrm{T}}\mathbf{\Sigma}^{-1}(\vec{x}-\vec{\mu})},
\]
where \( \vec{\mu} \) is the mean vector and \( \mathbf{\Sigma} \) the covariance matrix of \( \mathbf{X} \). A larger distance indicates lower similarity. A threshold-based indicator function \( \mathbb{I} \) determines acceptance or rejection:  
\[
    \mathbb{I} \rightarrow
    \begin{cases}
        1 \, (\text{Accept}) & \text{if } \mathcal{D}_M(\vec{x},\;\mathbf{X}) \leqslant \gamma, \\
        0 \, (\text{Reject}) & \text{if } \mathcal{D}_M(\vec{x},\;\mathbf{X}) > \gamma.
    \end{cases}
\]

Queries exceeding \( \gamma \) are routed to the native optimizer. 
The threshold \( \gamma \) controls the trade-off between reliability and optimization coverage.
A smaller $\gamma$ enforces a conservative strategy, rejecting more queries and routing them to the native optimizer.
This ensures robustness
but limits potential performance gains from the mixed cost-based optimizer.
A larger $\gamma$ relaxes acceptance criteria, allowing the learned optimizer to handle a broader range of queries. This increases the risk of including out-of-distribution queries, which may compromise reliability.
This trade-off enables the system to balance optimization and stability according to practical requirements.

By 
filtering incompatible queries, CQD ensures the planner operates only on in-distribution data where 
learned models are reliable, balancing plan quality and robustness.

\section{Experimental Evaluation} \label{sec:exp}


In this section, we conduct experiments to answer the following questions.
The experimental setup is described in Section~\ref{sec: setup}.
\begin{itemize}[itemsep=2pt,topsep=2pt,parsep=0pt,leftmargin=*]
    \item How much can \textsf{Delta} enhance final plan quality and training efficiency of query optimizers? (Section~\ref{subsec:overall})
    
    \item Does the value-based planner generate higher-quality candidate plan set than cost-based ones? (Section~\ref{subsec:horses})
    
    \item Does the cost model outperform the value network in selecting the optimal plan from candidate plans? (Section~\ref{subsec:horses})
    
    \item 
    Can CQD correctly identify the most compatible queries without negative effects?
    (Section~\ref{subsec: 7.4})
    
    \item What are the benefits of \textsf{Delta}'s different design choices? (Section~\ref{subsec:ablation_and_parameter})

    \item How do the parameters (e.g., $k$, $\gamma$) impact the overall performance. (Section~\ref{subsec:ablation_and_parameter})
\end{itemize}

\subsection{Experiment Setup}~\label{sec: setup}
\textbf{Workloads}. 
We conduct experiments on three workloads from three widely used benchmarks.
\vspace{-3pt}
\begin{itemize}[leftmargin=0pt,itemindent=15pt]
    \item {\bf Join Order Benchmark (JOB):}
    This is a query workload designed by Leis~\cite{leis2015good}, consisting of 113 queries generated from 33 query templates. 
    The dataset is based on the real-world Internet Movie Database (IMDB), sized at 3.6GB and containing 21 tables.
    For the query split, we follow Balsa~\cite{yang2022balsa}, randomly selecting 19 queries for testing and using the remaining 94 for training.
    This dataset is used to evaluate optimizer performance on real-world data.


    \item {\bf TPC-DS:}
    This benchmark includes both a database and a query generator~\cite{poess2002tpc}.
    We set the scale factor to 10 to generate approximately 10GB of data, including 26 tables.
    For the training-test split of the queries, we follow LOGER~\cite{chen2023loger}, selecting 5 templates as the test set and the other 15 templates as the training set.
    Three queries are generated for each template to form the test and training workloads.
    This workload evaluates optimizer performance on generated datasets.

    \item{\bf Stack:}
    This large and real dataset is designed by Bao~\cite{marcus2021bao}.
    It contains over 100GB of data from StackExchange websites (e.g., StackOverflow.com) and includes 11 tables.
    Some tables have composite primary keys.
    Similar to LOGER~\cite{chen2023loger}, we use 6 query templates with 5 queries each for testing, and 8 templates with 10 queries each for training.
    This workload is used to evaluate query optimizers on large-scale datasets.
\end{itemize}
\textbf{Baselines}. 
We choose the open-source DBMS \textbf{PostgreSQL}~\cite{postgres} as the baseline for the traditional \CBO.
Additionally, we use three state-of-the-art (SOTA) two-stage learned \CBO s: \textbf{Bao}~\cite{marcus2021bao}, \textbf{HybridQO}~\cite{yu2022cost}, and \textbf{Lero}~\cite{zhu2023lero}.
For \VBO s, we choose \textbf{LOGER}~\cite{chen2023loger} and \textbf{Balsa}~\cite{yang2022balsa}.
All optimizers are implemented following the provided source codes. 
For each \VBO, we deploy the \textsf{Delta} architecture on top, referred to as \textbf{LOGER$\textsf{-}\Delta$} and \textbf{Balsa$\textsf{-}\Delta$}, to assess the adaptability of the \textsf{Delta} architecture to different underlying \VBO \ structures.
\textbf{Metrics}. 
We utilize three primary metrics to evaluate the performance from various perspectives.
\begin{itemize}[leftmargin=0pt,itemindent=15pt]
    \item \textbf{Workload relative latency (WRL)} evaluates the performance of a learned query optimizer against PostgreSQL at the workload level. 
Let \( W = \{Q_1, Q_2, \ldots, Q_n\} \) denote a workload of \( n \) queries. For each query \( Q_i \), let \( P_i \) be the plan selected by the learned optimizer and \( P_{0_i} \) be the plan generated by PostgreSQL’s native optimizer \( \mathcal{N} \). 
WRL computes the workload total latency ratio of the learned optimizer to PostgreSQL:  
\[
WRL = \frac{\sum_{i=1}^n \mathcal{L}(P_i)}{\sum_{i=1}^n \mathcal{L}(P_{0_i})}
\]  
Moreover, \( 1/\text{WRL} \) represents the \emph{Speedup}, 
quantifying how much faster the learned optimizer executes the workload.

    \item   \textbf{Geometric mean relevant latency (GMRL)}~\cite{Yu2020join, chen2023loger,chen2023leon,chen2023base} reflects the average query-level execution performance relative to PostgreSQL.
    It is defined as the geometric mean of the relative latency values across the workload:
\[
GMRL = \left( \prod_{i=1}^n \mathcal{R}(P_i) \right)^{\frac{1}{n}} = \left( \prod_{i=1}^n \frac{\mathcal{L}(P_i)}{\mathcal{L}(P_{0_i})} \right)^{\frac{1}{n}}
\]

     \item \textbf{Training time $T_{t}$}  measures the time required for each learned optimizer to reach a stable state, reflecting its training cost.
     It includes both the model training time and the time for collecting training data. 
     The latter usually dominates, as it involves executing thousands of plans sequentially in a relatively stable system environment (e.g., quasi-consistent initial cache state, no other running resource-intensive processes).
\end{itemize}

\textbf{Parameters and environments}.
For baselines, we use the default setting in their original papers.
For \textsf{Delta}, we set the number of candidate plans $k$ to 10,
beam width $b$ to 20, and threshold $\gamma$ to 500.
Experiments are conducted on an Intel Core 2.80GHz server with 376GB RAM and three NVIDIA RTX 2080 Ti GPUs for model training and inference, running Ubuntu 18.04.
PostgreSQL 12.5 is used, configured with 96GB shared buffers, 192GB cache, and GEQO disabled.
We apply the \emph{pg\_hint\_plan}~\cite{pghintplan} plugin to specify the plan generated by \textsf{Delta}. 
To ensure a relatively consistent initial disk cache state for a fair comparison, 
we preload data using \emph{pgfincore}~\cite{pgfincore} and \emph{pg\_prewarm}~\cite{pgprewarm} for the Linux and PostgreSQL caches respectively before training and testing.
Each experiment is repeated three times, and the average result is reported to mitigate randomness.

\subsection{Overall Performance} \label{subsec:overall}
To assess \textsf{Delta}'s overall performance, 
we test each optimizer extensively on the three workloads.
All learned optimizers are trained on the training set until convergence or a 48-hour limit, then evaluated on an unseen test set to measure their actual performance.

\textbf{Workload latency analysis.}
Table~\ref{tab:final} reports \emph{WRL}, \emph{GMRL}, 
and $T_{t}$ of \textsf{Delta} and other optimizers on three test workloads.
Optimal results are in \textbf{bold}.
\textcolor{gray}{TLE} means that the optimizer cannot finish training within 48 hours.
Balsa fails to handle composite key joins in \emph{Stack}, leading to poor plans and excessive training time.

As shown in this table, with adequate training, \textsf{Delta}  outperforms all baselines in latency across all workloads.
Compared to PostgreSQL, LOGER$\textsf{-}\Delta$ achieves
speedups of 3.57$\times$, 2.22$\times$, and 2.22$\times$ on JOB, TPC-DS, and Stack, 
while Balsa$\textsf{-}\Delta$ yields 2.50$\times$ and 1.33$\times$ on JOB and TPC-DS.
Against two-stage learned \CBO s (Bao, Lero, HybridQO), \textsf{Delta} also improves plan latency.
Specifically, on the JOB, TPC-DS, and Stack, LOGER$\textsf{-}\Delta$ delivers average speedups of 3.36$\times$, 1.97$\times$, and 1.71$\times$,
while Balsa$\textsf{-}\Delta$ reaches 2.35$\times$ and 1.18$\times$ on JOB and TPC-DS.
When compared to \VBO s, \ours also demonstrates significant performance gains.
LOGER$\textsf{-}\Delta$, relative to LOGER, achieves speedups of 1.14$\times$, 1.31$\times$, and 1.51$\times$ on JOB, TPC-DS, and Stack, 
and Balsa$\textsf{-}\Delta$ outperforms Balsa by 1.55$\times$ and 4.21$\times$ on JOB and TPC-DS.

These improvements 
stem from \textsf{Delta}'s innovative mixed cost-based planner.
It combines a value network for exploring high-quality candidate plans and a cost model for accurately selecting the optimal one.
This synergy enhances the quality of the final plan and leads to a significant improvement in query execution speed.
Furthermore, CQD helps filter out 
 incompatible queries, 
preventing significant performance regressions.

\begin{table}
    \centering
    \scriptsize
    \footnotesize
    \setlength{\tabcolsep}{2pt}
    \caption{Performance comparison on three test workloads}
    \label{tab:final}
    \setlength{\extrarowheight}{0pt}
    \setlength{\aboverulesep}{0pt}
    \setlength{\belowrulesep}{0pt}
    \arrayrulecolor{black}
    \begin{tabular}{|c|c|c|c|c|c|c|c|c|c|} 
\midrule
\multirow{2}{*}{Optimizer} & \multicolumn{3}{c|}{$\emph {JOB}$}             & \multicolumn{3}{c|}{$\emph{TPC-DS}$}          & \multicolumn{3}{c|}{$\emph{Stack}$}                                                                \\ 
\cline{2-10}
                           & WRL           & GMRL        & $T_{t}$       & WRL           & GMRL         & $T_{t}$      & WRL                                     & GMRL                                    & $T_{t}$       \\ 
        \hline
        PostgreSQL                                                                                                                                    & 1.00 & 1.00 & —          & 1.00 & 1.00  & —              & 1.00 & 1.00  & —                                            \\ 
        \midrule
        Bao                                                                                                                                           & 0.75 & 1.23 &\textbf{0.43}                & 1.05 & 1.04 &0.22          & 0.86 & 0.97 & \textbf{0.39}                                        \\ 
        Lero                                                                                                                                          & 1.08 & 1.17  & 1.34       & 0.62 & 0.64 & 1.07          & 0.48 & 0.93 & 15.34                                       \\ 
        HybridQO                                                                                                                                      & 0.99 & 0.99 & 0.48       & 0.99 & 0.99 &\textbf{0.11}          & 0.97 & 0.98 &  0.46                                        \\ 
        \midrule
        LOGER                                                                                                                                         & 0.32 & 0.60 & 4.72       & 0.59 & 0.48 & 3.09         & 0.68 & 1.02 & 8.83                                       \\ 
        Balsa                                                                                                                                         & 0.62 & 1.34 & 11.37      & 3.16 & 1.25 &  10.00         & \textcolor{gray}{TLE}  & \textcolor{gray}{TLE}  & —                                           \\ 
        \hline
        LOGER$\textsf{-}\Delta$                                                                                                                                   & \textbf{0.28} & \textbf{0.56} & 1.88      & \textbf{0.45} & \textbf{0.40}  & 1.06         & \textbf{0.45} & \textbf{0.74}   & 6.10                                       \\ 
        Balsa$\textsf{-}\Delta$                                                                                                                                 & 0.40 & 0.74  & 10.61        & 0.75 & 0.63  &9.33   &\textcolor{gray}{TLE} & \textcolor{gray}{TLE}   & —                                            \\
        \midrule
    \end{tabular}
\vspace{-10pt}
\end{table}
\begin{figure*}[th]
\centering
\subfigure[LOGER and LOGER$\textsf{-}\Delta$ on $\emph {JOB}$]{

    \begin{minipage}[t]{0.30\linewidth}
        \includegraphics[width=1\linewidth]{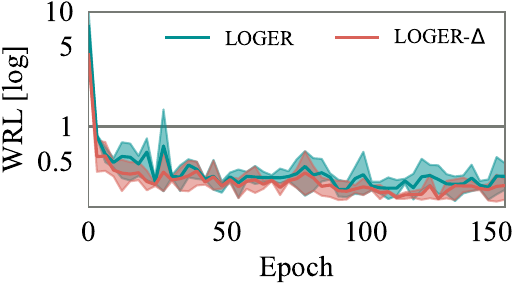}
    \end{minipage}
    
}
\subfigure[LOGER and LOGER$\textsf{-}\Delta$ on $\emph {TPC-DS}$]{
    \begin{minipage}[t]{0.30\linewidth}
        \includegraphics[width=1\linewidth]{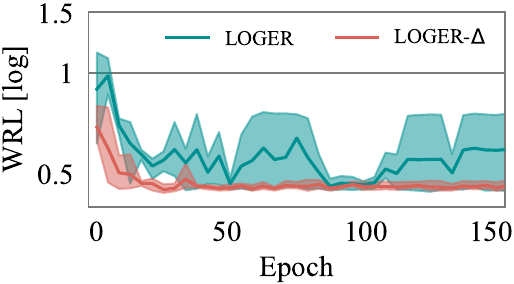}
    \end{minipage}
}
\subfigure[LOGER and LOGER$\textsf{-}\Delta$ on $\emph {Stack}$]{
    \begin{minipage}[t]{0.30\linewidth}
        \includegraphics[width=1\linewidth]{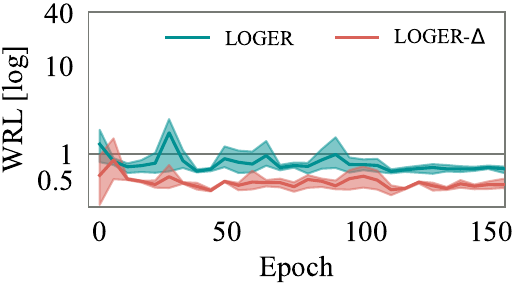}
    \end{minipage}
}\vspace*{-0.08in}

\subfigure[Balsa and Balsa$\textsf{-}\Delta$ on $\emph {JOB}$]{
    \begin{minipage}[t]{0.30\linewidth}
        \includegraphics[width=1\linewidth]{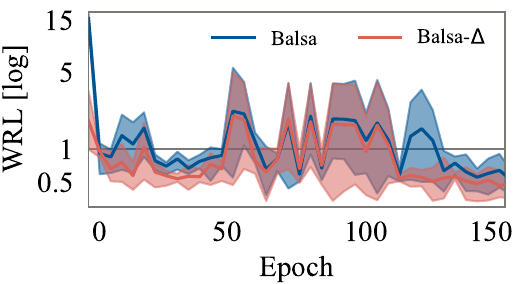}
    \end{minipage}
}
\subfigure[Balsa and Balsa$\textsf{-}\Delta$ on $\emph {TPC-DS}$]{
    \begin{minipage}[t]{0.30\linewidth}
        \includegraphics[width=1\linewidth]{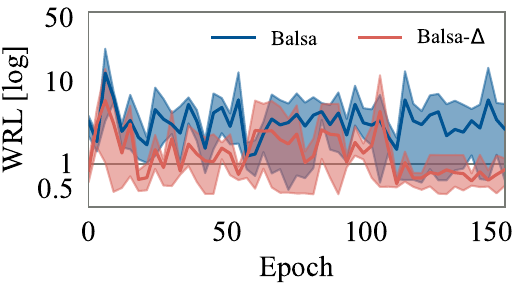}
        \vspace{-0.18in}
    \end{minipage}
}
\subfigure[Training plan size on different workloads]{
    \begin{minipage}[t]{0.30\linewidth}
        \includegraphics[width=1\linewidth]{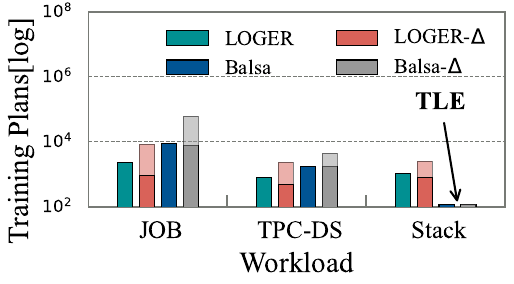}
        \label{subfig:datacost}\vspace{-0.18in}
    \end{minipage}
}
\caption{Training curves of \ours and value-based optimizers and their training plan sizes needed when convergence}
\label{fig:overall}
\vspace{-0.18in}
\end{figure*}

\begin{figure*}[th]
\centering
\subfigure[$\emph {JOB}$]{
    \begin{minipage}[t]{0.30\linewidth}
        \includegraphics[width=1\linewidth]{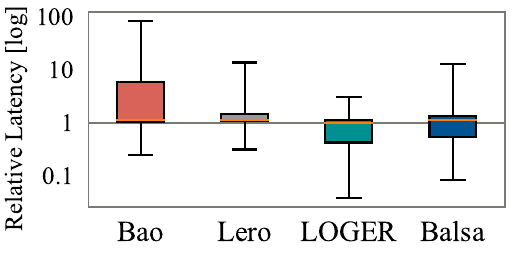}
    \end{minipage}
     \label{fig:box_job}
}
\subfigure[$\emph {TPC-DS}$]{
    \begin{minipage}[t]{0.30\linewidth}
         \includegraphics[width=1\linewidth]{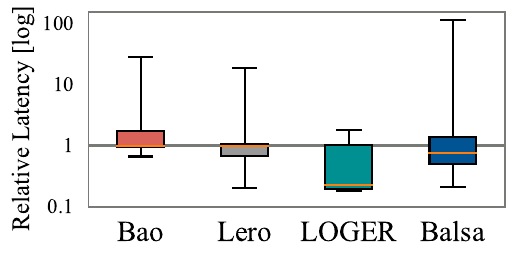}
    \end{minipage}
    \label{fig:box_tpcds}
}
\subfigure[$\emph {Stack}$]{
    \begin{minipage}[t]{0.30\linewidth}
         \includegraphics[width=1\linewidth]{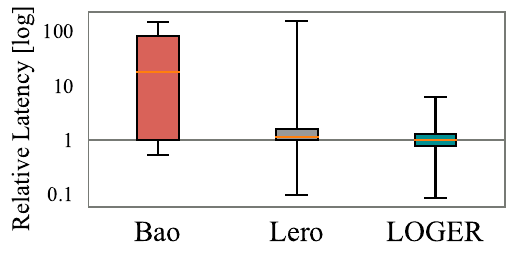}
    \end{minipage}
    \label{fig:box_stack}
}
\caption{The plan performance distribution of the candidate plan set generated by different planners}
\label{fig:boxplot}
\vspace{-0.17in}
\end{figure*}
\begin{figure*}[th]
\centering
\subfigure[WRL]{
        \includegraphics[width=0.43\linewidth]{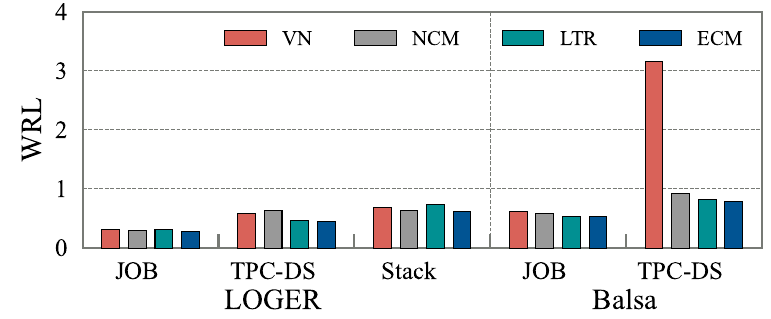}
     \label{fig:stage_two_wrl}
     \vspace{-0.19in}
}
\subfigure[GMRL]{
       \includegraphics[width=0.43\linewidth]{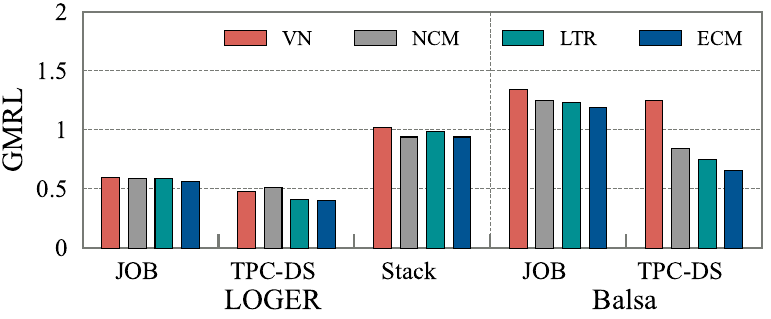}
     \label{fig:stage_two_gmrl}
     \vspace{-0.19in}
}
\caption{The performance of different models on optimal ones selection from the candidates generated by LOGER or Balsa} 
\label{fig:stage_two}
\vspace{-0.20in}
\end{figure*}
\textbf{Training cost analysis.}
Table~\ref{tab:final} shows that applying the \ours framework significantly reduces training costs for all \VBO s across the three benchmarks.
Compared to LOGER, LOGER$\textsf{-}\Delta$
converges using only 39.8\%, 34.3\%, and 69.0\% of the training time on JOB, TPC-DS, and Stack benchmarks, respectively. 
Similarly, Balsa$\textsf{-}\Delta$ requires only 93.3\% and 96.6\% of Balsa's training time on the JOB and TPC-DS benchmarks.
In comparison to learned \CBO s, although \ours may require more training time, the quality of the generated plans is substantially superior on both WRL and GMRL.  

This improvement 
is primarily
due to \textsf{Delta}’s two-stage framework.
In stage $\Rmnum{1}$, \ours transfers a part of the accuracy requirements from the value network to the cost model in stage $\Rmnum2$, which is easier to train. 
Furthermore, \ours reuses and enhances the training data in stage $\Rmnum{1}$ for training the model in stage $\Rmnum{2}$, which significantly reduces the training cost.  

\textbf{Training curve analysis.}
To better analyze the performance of \textsf{Delta},
Figure~\ref{fig:overall} shows the training curves of LOGER, LOGER$\textsf{-}\Delta$, Balsa, and Balsa$\textsf{-}\Delta$ in three workloads, along with the training data required for convergence.
Shaded areas indicate the range between minimum and maximum values over three runs and the solid lines show the median.
The curve tracks log-scale WRL over training epochs, with the gray line marking PostgreSQL’s performance.

One can observe that 
\textsf{Delta} consistently achieves faster convergence and better WRL across all workloads and two underlying \VBO s.
Notably, in early training,
LOGER and Balsa perform poorly,
while \textsf{Delta} already achieves a level of performance comparable to PostgreSQL.
LOGER$\textsf{-}\Delta$ takes only 40\%, 34\%, and 69\% of the convergence time of LOGER on \emph{JOB}, \emph{TPC-DS}, and \emph{Stack}, respectively.
Balsa$\textsf{-}\Delta$ converges with 93\% and 97\% time of Balsa on \emph{JOB}, and \emph{TPC-DS}. 
This performance improvement is particularly pronounced in the large-scale dataset \emph{Stack}.
After sufficient training, LOGER$\textsf{-}\Delta$ 
reaches a WRL 0.2 lower than LOGER.

Additionally, \textsf{Delta} exhibits more stable and reliable  training performance, showing lower variability compared to LOGER and Balsa, particularly on \emph{TPC-DS} and \emph{Stack}.
This is mainly due to the mixed cost-based planner and CQD effectively enhancing the robustness of \VBO s.

Figure~\ref{subfig:datacost} shows the number of executed plans during training until convergence.
The transparent portions of LOGER$\textsf{-}\Delta$ and Balsa$\textsf{-}\Delta$ 
indicate additional plans generated via data augmentation.
As shown, \textsf{Delta} reduces the number of executed plans required by LOGER and Balsa, greatly improving training efficiency.
Moreover, through data augmentation, \textsf{Delta} expands samples, providing sufficient training data for the model in stage $\Rmnum{2}$ and significantly minimizing cost model training expenses.

\begin{figure*}[th] \centering \subfigure[The impact of \textsf{Delta}'s elements]{     \begin{minipage}[t]{0.23\linewidth}         \includegraphics[width=1\linewidth]{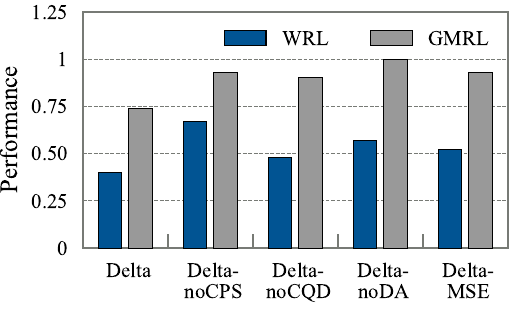}     \end{minipage}      \label{fig:abaltion} } \subfigure[The impact of $\gamma$]{     \begin{minipage}[t]{0.23\linewidth}         \includegraphics[width=1\linewidth]{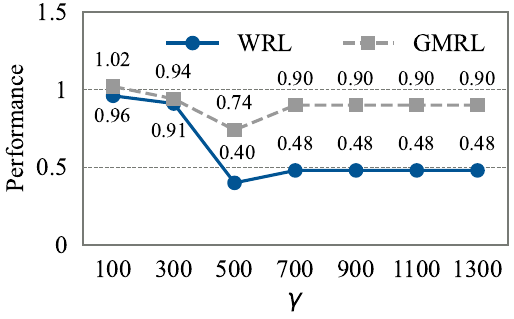}     \end{minipage}     \label{fig:gamma} } \subfigure[The impact on planning time of $k$]{     \begin{minipage}[t]{0.23\linewidth}         \includegraphics[width=1\linewidth]{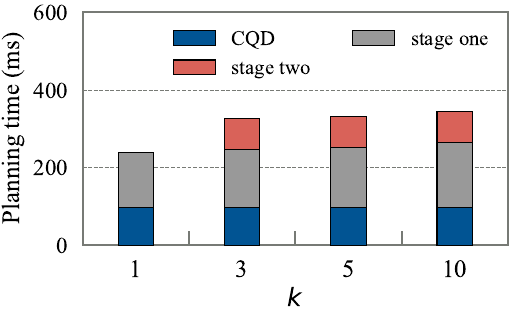}     \end{minipage}     \label{fig:kplanning} } \subfigure[The impact on performance of $k$]{     \begin{minipage}[t]{0.23\linewidth}         \includegraphics[width=1\linewidth]{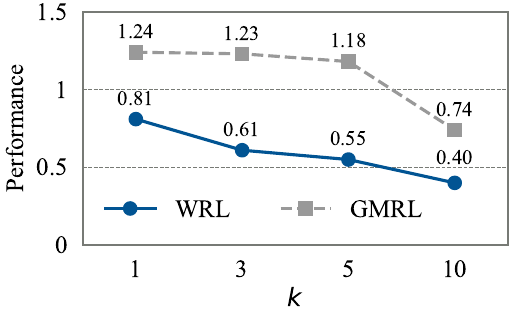}     \end{minipage}     \label{fig:kperformance} } 
\caption{The impact of \textsf{Delta}'s different choices on JOB test workload on top of Balsa} \label{fig:all} \vspace{-0.25in} \end{figure*}
\begin{figure}[th] \centering \hspace*{-0.05in} \subfigure[CQD on top of LOGER]{         \includegraphics[width=0.46\linewidth]{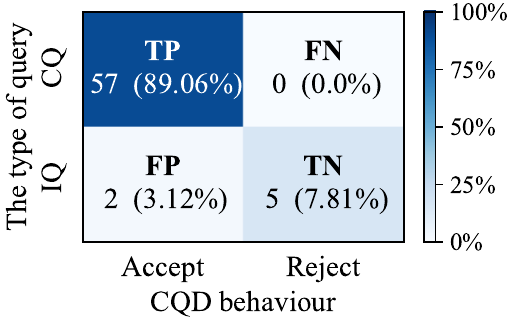}         \label{fig:cmatrix1} } \hspace*{-0.05in} \subfigure[CQD on top of Balsa]{         \includegraphics[width=0.46\linewidth]{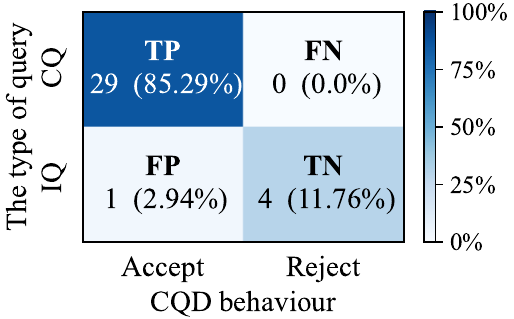}
\label{fig:cmatrix2} } \caption{Confusion matrix of CQD} \label{fig:cmatrix} \vspace{-0.25in} \end{figure}
\subsection{Value or Cost : Horses for Courses}~\label{subsec:horses}
To validate 
the value network and cost model for
plan search and optimal plan selection in \textsf{Delta}'s two-stage design, two corresponding experiments are conducted.

\textbf{Which one can generate better sets of candidate plans?}
\label{subsubsec:7.3.1}
To evaluate the effectiveness of different plan generators in exploring diverse search spaces and producing high-quality candidate plan sets, we conduct an experimental study. 
Four state-of-the-art planners are compared: the cost-based ones Bao and Lero, and the value-based ones LOGER and Balsa. 
All planners are trained on three workload training splits, though Balsa was excluded from the Stack workload due to incompatibility. 
In testing, each planner generates the candidate plan set for each query from the test workloads and their relative latency values to PostgreSQL are recorded.
Figure~\ref{fig:boxplot} summarizes the results through a boxplot of normalized relative latency values of candidate plans. 
The yellow line in each box marks the median latency of all plans produced by each planner, reflecting its central performance tendency.

The evaluation demonstrates that VBOs produce better candidate sets than CBOs,
based on three metrics.
(1) Best-case improvement: VBOs (LOGER and Balsa) consistently produce optimal plans with lower latency than CBOs (Bao and Lero) across all workloads.
(2) Worst-case risk: The worst plans generated by VBOs exhibit reduced latency degradation compared to CBOs in most scenarios.
(3) High-quality plan ratio: 
The ratio of the high-quality plans ($\mathcal{R}(P) < 1$) generated by VBOs achieves a median ratio (yellow line) (\(\eta(S_{\mathcal{G}}^{k}) \geq 50\%\)) across workloads, significantly outperforming CBOs, which frequently fall below this threshold.  
The results align with the theoretical analysis in Section~\ref{subsec:one}, confirming that 
value-guided generation in VBOs outperforms CBOs' heuristic-based approaches in producing robust candidate sets.

However, it is worth noting that in certain workloads, such as Balsa’s worst-case latency on TPC-DS, performance remains relatively poor due to the difficulty of training the value network.
Nonetheless, \textsf{Delta}’s two-stage design can mitigate this by accurate selection of the optimal plan.

\textbf{Which one can select the optimal plan more accurately?}~\label{subsubsec:7.3.2}
To evaluate
plan selection effectiveness of models,
we compare the value network (VN) with three cost models: the naive cost model (NCM), the learning-to-rank model (LTR), and our enhanced cost model (ECM),
by assessing their ability to select optimal plans from candidate sets.
All models are trained on the same executed plan pool 
until convergence. 
We then feed the candidate sets generated by VBOs from Section~\ref{subsubsec:7.3.1} into models and assess their performance on the quality of the selected plans.

Figure~\ref{fig:stage_two} shows WRL and GMRL of the final plans selected by different models on test workloads.
Overall, the performance of the three cost models is better than the value network in most scenarios.
This is because the cost model is easier to train than the value network and provides more accurate cost estimates for complete plans, making it more suitable for the plan selection.
Among them, our enhanced cost model consistently achieves the best performance across all scenarios.
This improvement results from the data augmentation and the heteroscedastic regression loss function, which together significantly boost the accuracy of the cost model.
\vspace{-0.5pt}

\subsection{The impact of CQD}\label{subsec: 7.4}
To evaluate the effectiveness of CQD, Figure~\ref{fig:cmatrix1} and Figure~\ref{fig:cmatrix2} show the confusion matrices
for
LOGER$\textsf{-}\Delta$ and Balsa$\textsf{-}\Delta$ respectively.
The matrix records the sum of experimental data from all three workloads.
Rows represent queries types, including compatible queries (CQ) and incompatible queries (IQ).
The columns show whether the CQD accepts the query to be optimized by the mixed cost-based planner. 
Each cell records the number and percentage of different categories of queries.
A desirable CQD should achieve high recall and specificity.
Higher recall means fewer compatible queries are mistakenly rejected, causing less negative impact.
Higher specificity means more incompatible queries are correctly filtered, producing more benefits.
Figure~\ref{fig:cmatrix} shows that
CQD achieves a 100\% recall rate in both cases, meaning it has no adverse effects on the planner.
Specificity reaches 71.4\% on LOGER and 80.0\% on Balsa, indicating CQD successfully prevents poor plans for incompatible queries.

\subsection{Ablation Study and Parameter Analysis} \label{subsec:ablation_and_parameter}
To evaluate the influence of architectural components and parameters on \textsf{Delta},
we conduct
systematic ablation studies and parameter sensitivity analyses.
We use the JOB workload, a widely adopted real-world benchmark in query optimization, and choose Balsa as the underlying model.
This choice is 
due to 
LOGER's full compatibility with the test queries in JOB (i.e., zero incompatible queries), 
making it unable to showcase the CQD's effectiveness.

\textbf{Ablation study.} \label{subsubsec:ablation}
We investigate the impact of different elements in \textsf{Delta} on query optimization. 
The corresponding results are shown in Figure~\ref{fig:abaltion}.
\textsf{Delta}-noCPS, \textsf{Delta}-noCQD, and \textsf{Delta}-noDA are the variants of Balsa$\textsf{-}\Delta$ without the 
CPS, CQD, and \emph{data augmentation}, respectively.
\textsf{Delta}-MSE 
replaces the heteroscedastic regression loss with mean squared error (MSE) loss during training.
We can observe that each component of \textsf{Delta} does
contribute to the query execution performance.
In particular, WRL increases 68\%, 20\%, 43\%, and 30\% without CPS, CQD, \emph{data augmentation}, and \emph{heteroscedastic regression loss function}.
GMRL increases 26\%, 22\%, 35\%, and 26\%, respectively.
CPS and data augmentation show the largest impact, confirming the rationality and effectiveness of \textsf{Delta}'s design.
\emph{Data augmentation} and \emph{heteroscedastic regression loss function} in CPS do boost performance,
while CQD also benefits query optimization.

\textbf{Parameter evaluation.} \label{subsubsec:parameter}
\emph{Effect of $\gamma$}.
To verify the effect of threshold $\gamma$ on 
CQD in \textsf{Delta}, we vary $\gamma$ from 100 to 1300.
Figure~\ref{fig:gamma} shows the results of JOB test workload on the top of Balsa. 
\textsf{Delta} gets the best performance when $\gamma$ is 500,
which is used as the default.
With smaller values (e.g., 100 or 300), 
\textsf{Delta} 
behaves conservatively,
yielding performance close to PostgreSQL.
When $\gamma$ exceeds 700, CQD accepts all queries, and performance aligns with \textsf{Delta}-noCQD.

\emph{Effect of $k$}.
Figure~\ref{fig:kplanning} and Figure~\ref{fig:kperformance} study \textsf{Delta}'s mean per-query planning time and performance on JOB (with Balsa) using different values of $k$, the number of candidate plans.
In all cases, planning time remains under 350ms.
Larger $k$ improves WRL and GMRL but slightly increases planning time, mainly due to value-based candidate plan generation.
Fortunately, the planning time is insensitive to $k$. 
 $k$ is set to 10 in our experiments and 
it can be adjusted as actual needs.


\balance 

\flushbottom
\section{Conclusion}\label{sec:conclusion}

We propose \textsf{Delta}, a mixed cost-based query optimization framework that addresses the limitations of CBOs and VBOs through a compatible query detector and a two-stage planner. 
Compared to traditional CBO, PostgreSQL, \ours achieves 2.34$\times$ faster workload execution by mitigating search space explosion and heuristic pruning constraints. 
Against two-stage CBOs, \ours outperforms by 2.11$\times$
via
higher candidate plan quality produced by the value-based planner. 
\ours also outperforms VBOs, requiring 66.6\% of training time while delivering 1.94$\times$ higher execution performance, due to its fine-grained learned cost-based ranking.
The Mahalanobis distance-based detector further ensures robustness by preemptively filtering incompatible queries. 
These results highlight \textsf{Delta}'s ability to unify the value network and the cost model and detect incompatible queries to determine more efficient plans, offering a practical solution for modern query optimization.

\clearpage
\bibliographystyle{IEEEtran}
\bibliography{main}


\end{document}